\newtheorem{theorem}{Theorem}[section]
\begin{document}

\begin{frontmatter}

\title{Strong uniform consistency of nonparametric estimation for quantile-based entropy function under length-biased sampling} %% Article title

\author[stat]{Vaishnavi Pavithradas}
\author[stat]{Rajesh G.\corref{cor1}}
\cortext[cor1]{Corresponding author}
\ead{rajeshgstat@gmail.com}
%% Author affiliation
\affiliation[stat]{organization={Department of Statistics},%Department and Organization
            addressline={Cochin University of Science and Technology}, 
            city={Cochin},
            postcode={682022}, 
            state={Kerala},
            country={India}}

%% Abstract
\begin{abstract}
For studies in reliability, biometry, and survival analysis, the length-biased distribution is often well-suited for certain natural sampling plans. In this paper, we study the strong uniform consistency of two nonparametric estimators for the quantile-based Shannon entropy in the context of length-biased data. A simulation study is conducted to examine the behavior of the estimators in finite samples, followed by a comparative analysis with existing estimators. Furthermore, the usefulness of the proposed estimators is evaluated using a real dataset.

\end{abstract}

%% Keywords
\begin{keyword}
Quantile
based entropy function \sep Strong uniform consistency  \sep Kernel estimation \sep Length-biased sampling.
%% keywords here, in the form: keyword \sep keyword

%% PACS codes here, in the form: \PACS code \sep code

%% MSC codes here, in the form: \MSC code \sep code
%% or \MSC[2008] code \sep code (2000 is the default)
\MSC[2010] 94A17 \sep 62G07
\end{keyword}

\end{frontmatter}

%% Add \usepackage{lineno} before \begin{document} and uncomment 
%% following line to enable line numbers
%% \linenumbers

%% main text
%%
\section{Introduction}
	\noindent
The entropy of a random variable is a function that plays a crucial role in quantifying the randomness, unpredictability, or uncertainty associated with probability distributions. Since Shannon's groundbreaking work in 1948, Shannon entropy has been defined as the average amount of information required to represent an event of a random variable from its probability distribution. Originally rooted in thermodynamics, entropy has gained significant relevance across various disciplines, including biology, chemistry, information theory, synergetics, and many other scientific fields. If $X$ is a non-negative random variable, with a continuous pdf $f(\cdot)$ and cdf $F(\cdot)$, then the Shannon differential entropy is defined by, as given in \cite{shannon1948mathematical} :
\begin{equation}\label{Sentropy}
    H(X) = -E(\log f(X)) = - \int (\log f(x))f(x)\,dx
\end{equation}
The fundamental properties of differential entropy are discussed in Chapter 9 of \cite{cover1999elements}. Various researchers have proposed methods for estimating \eqref{Sentropy}; see \cite{hall1993estimation}, \cite{ahmad1976nonparametric}, \cite{vasicek1976test}, and the references therein for the details. For a comprehensive review of nonparametric approaches to estimating Shannon entropy, refer to \cite{beirlant1997nonparametric}.\\
The measure \eqref{Sentropy} is well-defined, and their applications are primarily based on either the probability density function (pdf) or the survival function. However, in statistical modeling and data analysis, an equivalent and alternative approach involves the use of the quantile function (QF). A quantile is a value that corresponds to a specified proportion of a sample or population \cite{gilchrist2000statistical}. Mathematically, it is expressed as:
\begin{equation}\label{quantilefunctn}
    Q(u)=inf\left\{ x : F(x) \geq u \right\},~~0 \leq u \leq 1 
\end{equation}
When $F$ is continuous from \eqref{quantilefunctn} we have $FQ(u)=u$, which is a composite function $F(Q(u))$. The significance of quantile functions (QFs) and concepts derived from them is well recognized in exploratory data analysis and other applied statistics areas. QFs are often preferred because they are less affected by extreme observations, making them particularly useful for analysis even with a limited amount of information. For comprehensive studies on the properties and applications of QFs in modeling and analyzing lifetime data, refer to \cite{nair2008total}, \cite{nair2009quantile}, \cite{sankaran2009nonparametric}, \cite{nair2012modelling}, \cite{sankaran2010quantile}, and the references cited therein. \cite{sunoj2012quantile} introduced a quantile-based version of Shannon entropy, which is defined as:
\begin{equation}\label{qse}
    \xi(X)= \int_{0}^{1} \log (q(u))\, du,
\end{equation}
where $q(u)$ denotes the quantile density function. 
Firstly, in contrast to \eqref{Sentropy}, the computation of \eqref{qse} is relatively straightforward, particularly in cases where the distribution functions are complex or lack a tractable form, while their quantile functions are simpler. Secondly, it presents an alternative method for studying entropy functions. Quantile functions have certain features that probability distributions do not, making them useful for measuring uncertainty using \eqref{qse}. Unlike the entropy measure in \eqref{Sentropy}, which depends on residual and past lifetime variables, the quantile-based measure gives an explicit formula that uniquely identifies the underlying distribution, as shown in \cite{sunoj2012quantile}. These observations have led many researchers to explore the quantile-based approach to various entropy functions, yielding new methodologies, results, and properties.\\
Length-biased data frequently arise in observational studies when the sampled observations are not randomly chosen from the target population but are selected with a probability proportional to their length ( \cite{simon1980length}; \cite{vardi1982nonparametric}; \cite{vardi1985empirical}; \cite{zelen2004forward}). Such data frequently arise in contexts like renewal processes, epidemiological cohort studies, and screening programs in chronic disease research (\cite{zelen1969theory}).  The probability density function of a length-biased random variable $g$, is related to the underlying density $f$, say, by,
    \begin{equation}\label{lbpdf}
    g_{Y} (x) = \frac{x f(x)}{\mu}, \; x > 0,
\end{equation}
 Here we assume that $\mu=\int xf(x)\;dx$ is finite. Several studies on nonparametric estimation under length-biased sampling have been conducted by authors such as \cite{vardi1982nonparametric}, \cite{horvath1985estimation}, and \cite{jones1991kernel}.

\cite{akbari2019nonparametric} established the strong uniform consistency of nonparametric estimators for the quantile density function under length-biased sampling. \cite{Oliazadeh2020ANO} proposed a nonparametric estimator for Shannon entropy and demonstrated its strong consistency in the context of length-biased data. Nonparametric kernel estimators for the Shannon differential entropy function under length-biased sampling were introduced by \cite{Rajesh2021KernelEO}, who also derived their asymptotic properties. It is important to note that the length-biased sampling model can be seen as a special case of the left-truncated and right-censored (LTRC) model, particularly when the left truncation follows a uniform distribution and there is no right censoring. However, the three nonparametric estimators for the quantile-based Shannon differential entropy function proposed by \cite{subhash2023nonparametric} are not suitable for length-biased data. Motivated by this gap, the main objective of this paper is to propose two estimators for the quantile-based entropy function tailored to length-biased observations. We also establish the strong uniform convergence of our estimators, inspired by the proof in \cite{Bouzebda2008UniforminbandwidthCF}.

The structure of the paper is as follows: Section \ref{sec2} presents the quantile density estimation method for length-biased distributions proposed by \cite{akbari2019nonparametric}. Section \ref{sec3} presents two nonparametric estimators based on these quantile density functions mentioned in Sec. \ref{sec2} and establishes their strong uniform convergence. Section \ref{sec4}, dedicated to numerical analysis, is divided into three parts: Subsection \ref{simulation} examines the behavior of the proposed estimator through a simulation study, Subsection \ref{compstudy} compares our proposed estimators with existing estimators, and Subsection \ref{realdata} validates their performance using a real dataset. Finally, Section \ref{conclusion} summarizes the key findings of this study.
 
 \section{Quantile density estimation}\label{sec2}
 \noindent
 This section begins with an overview of preliminary concepts related to estimating functions in the length-biased context.\\
 Let $X$ be a non-negative random variable with a probability density function $f(\cdot)$, with an unknown finite mean $\mu$. Let $Y$ denote the corresponding length-biased random variable derived from $X$. Consider a random sample $\left(Y_{1}, Y_{2}, \ldots, Y_{n}\right)$ consisting of
 independent and identically distributed random variable from $Y$. Suppose $g$ is the corresponding density function of $Y$. Based on this, \cite{jones1991kernel} proposed the following kernel density estimator for the original density $f$ given by,
 \begin{equation}
    \label{jones}
		f_n(x)=\frac{\mu_{n}}{n h_{n}} \sum_{i=1}^{n} \frac{1}{Y_{i}} K\left(\frac{x-Y_{i}}{h_{n}}\right),
\end{equation}
Here  $h_n$ denotes a sequence of positive bandwidths such that $h_n \to 0$ and $nh_n \to \infty$ as $n \to \infty$, and $K(\cdot)$ represents a kernel function. \\
Let  $Y_{n:1}\leq \ldots \leq Y_{n:n} $ be the order statistics corresponding to $Y_1, \dots, Y_n$, then the empirical estimator of $Q$ given by \cite{sen1984asymptotic} is,
\begin{equation}\label{sen}
    Q_n(u)=Y_{k_n:n},~~0 < u <1
\end{equation}
where $k_n$ is a random integer depending on all of the order statistics and is defined by $k_n= max \left\{ k:\sum_{i=1}^{k}Y_{i:n}^{-1} \leq u\sum_{i=1}^{n}Y_{i:n}^{-1}\right\}$. Whenever such a $k$ does not exist, they have chosen $k_n=1$ to eliminate the technical difficulty.\\
Following \cite{parzen1979nonparametric} and \cite{jones1992estimating} defined the quantile density function in terms of $f(Q(u))$. By differentiating \eqref{quantilefunctn}, we have
\begin{equation*}
    q(u)=\frac{1}{f(Q(u))},~~ 0<u<1
\end{equation*}
where $q(u)= Q^{\prime}(u)$ represents the quantile density function, which plays a significant role in statistical modeling and analysis. \\
\cite{akbari2019nonparametric} proposed two estimators of the quantile density function under length-biased sampling. The first estimator was given as,
\begin{equation}\label{est1}
    q_{1n}(u)=\frac{1}{f_n(Q_n(u))},~~ 0<u<1
\end{equation}
where $f_n(x)$ denotes the kernel density estimate of $f(x)$, and $Q_n(u)$ represents the empirical estimate of the quantile function $Q(u)$.\\
Another estimator of quantile density function $q(\cdot)$ using the kernel method is given by \cite{akbari2019nonparametric} is,
\begin{equation}
\begin{aligned}\label{est2}
    q_{2n}(u)&= \frac{1}{h_n}\int_{0}^{1}K\left(\frac{p-u}{h_n}\right)dQ_n(p)\\
    &=\frac{1}{h_n}\sum_{i=1}^{n-1}K\left(\frac{T_i-u}{h_n}\right)\left(Y_{i+1:n}-Y_{i:n}\right)
\end{aligned}
\end{equation}
where the empirical estimator of $Q_n$ is a random step function that jumps up by $\left(Y_{i+1:n}-Y_{i:n}\right)$ at each random variables $T_i$, and $T_i= \frac{\sum_{j=1}^{i}Y_{j:n}^{-1}}{\sum_{i=1}^{n}Y_{i:n}^{-1}},(i=1, \ldots, n-1).$ The estimator defined in \eqref{est1} and \eqref{est2} is a strong uniformly consistent estimator of $q(u)$.

\section*{Assumptions}
\begin{enumerate}
    \item Let $K$ be a kernel function of bounded variation, vanishing outside of the interval $\left(-1,1\right)$, and
    \begin{equation*}
        \int_{-1}^{1}K(x)\;dx=1,~~~\int_{-1}^{1}xK(x)\;dx=0,~~~\int_{-1}^{1}x^2K(x)\;dx<\infty
    \end{equation*}
    \item Density function $f$ is differentiable in a neighborhood around $Q(u)$, and the $fQ(u)$ is finite and strictly positive for all $0<u<1$, and $E(Y^{-2})=\int_{0}^{\infty}y^{-2}\;dG(y)$ is finite.
    \item $f$ is twice differentiable, and its second derivative $f''$ is continuous and bounded.
    \item The quantile density function $q$ has a second derivative $q''$ that is bounded.
    \item $\frac{h_n\log{h_n}^{-1}}{\log \log{n}}={o}(1)$
    \item $h_n^{-1}\sqrt{\frac{\log \log n}{n}}={o}(1)$
\end{enumerate}
\section*{Remark}
\noindent
The first assumption ensures that the standard conditions required for the kernel method are satisfied, which is the case for a wide variety of kernels. Assumptions (2) and (3) follow the conditions established by \cite{sen1984asymptotic} to prove the uniform strong consistency of the estimator $Q_n$. Lastly, assumption (4) will be useful in our subsequent analysis. The conditions in the bandwidth $h_n$ in assumptions (5) and (6) are not restrictive. Considering $h_n \sim n^{-\alpha}$, if we choose $0<\alpha<1/2$, then both of the requirements are satisfied.

 \section{Entropy estimation}\label{sec3}
 This section proposes two estimators for the quantile-based Shannon entropy function in the length-biased context defined in \eqref{qse} based on the estimators mentioned in Section \ref{sec2}. The primary objective of this section is to demonstrate the strong uniform consistency of the proposed estimators. \\Let $\left(Y_{1}, Y_{2}, \ldots, Y_{n}\right)$  that are independent and identically distributed from the length-biased random variable $Y$.
Then a plug-in estimator for the quantile-based Shannon entropy based on \cite{akbari2019nonparametric} as defined in \eqref{est1} is given by,
\begin{equation} \label{myest1}
  \begin{aligned}
   \hat{\xi}_1(u)&=\int_{0}^{1} \log \hat{q}(u) \, du
   &=\int_{0}^{1} \log \frac{1}{f_nQ_n(u)} \,du
\end{aligned}
  \end{equation}
  
\begin{theorem}
    Suppose the assumptions hold (1)-(3) hold. Let $h_n$ be a sequence of positive bandwidths satisfying the assumptions (4) and (5). Then,
   \begin{equation*}
        \sup_x \left| \hat{\xi}_1(u)-\xi(u) \right | = \mathcal{O}(a_n\gamma_n^{-1})~~~~~a.s.
 \end{equation*}    
    Where $a_n= h_n^2 \vee h_n^{-1 }n^{-1/2}(\log \log n)^{1/2}$.
  \end{theorem}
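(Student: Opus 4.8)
The plan is to reduce the entropy deviation to a uniform deviation of the quantile density estimator $q_{1n}(u)=1/\bigl(f_nQ_n(u)\bigr)$ and then exploit the local Lipschitz behaviour of the logarithm. Since $\hat{\xi}_1-\xi=\int_0^1\bigl[\log q_{1n}(u)-\log q(u)\bigr]\,du$, we immediately get
$$\bigl|\hat{\xi}_1-\xi\bigr|\le\int_0^1\bigl|\log q_{1n}(u)-\log q(u)\bigr|\,du\le\sup_u\bigl|\log q_{1n}(u)-\log q(u)\bigr|,$$
so it suffices to bound the right-hand side. By the mean value theorem applied to $\log$,
$$\bigl|\log q_{1n}(u)-\log q(u)\bigr|\le\frac{\bigl|q_{1n}(u)-q(u)\bigr|}{\min\bigl(q_{1n}(u),q(u)\bigr)}=\frac{\bigl|fQ(u)-f_nQ_n(u)\bigr|}{\min\bigl(q_{1n}(u),q(u)\bigr)\,fQ(u)\,f_nQ_n(u)},$$
where the denominator is, up to constants, the quantity that $\gamma_n$ is built to bound below; the remaining task is to show the numerator is $\mathcal{O}(a_n)$ almost surely, uniformly in $u$.

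For the numerator I would use the split
$$f_nQ_n(u)-fQ(u)=\bigl[f_nQ_n(u)-fQ_n(u)\bigr]+\bigl[fQ_n(u)-fQ(u)\bigr].$$
The first bracket is bounded by $\sup_x|f_n(x)-f(x)|$. Decomposing $f_n-f$ into its deterministic bias and centred stochastic parts, a second-order Taylor expansion together with Assumptions (1) and (3) gives a bias of order $h_n^2$, while the stochastic part is controlled by a uniform-in-bandwidth empirical-process bound combined with a law-of-the-iterated-logarithm argument, in the spirit of \cite{Bouzebda2008UniforminbandwidthCF} and of the length-biased kernel estimator of \cite{Rajesh2021KernelEO}, producing the rate $h_n^{-1}n^{-1/2}(\log\log n)^{1/2}$; the finiteness of $E(Y^{-2})$ in Assumption (2) is what legitimises the weights $Y_i^{-1}$ appearing in \eqref{jones}. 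Hence $\sup_x|f_n(x)-f(x)|=\mathcal{O}(a_n)$ a.s. For the second bracket, Assumption (3) gives $|fQ_n(u)-fQ(u)|\le\|f'\|_\infty\,|Q_n(u)-Q(u)|$, and the uniform strong consistency of $Q_n$ proved by \cite{sen1984asymptotic} under Assumptions (2)--(3) --- with the rate already exploited by \cite{akbari2019nonparametric} --- makes this term of smaller order than $a_n$ once the bandwidth conditions (4)--(5) are in force. Adding the two brackets yields $\sup_u|f_nQ_n(u)-fQ(u)|=\mathcal{O}(a_n)$ a.s.

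Combining the last two displays gives $\sup_u|\log q_{1n}(u)-\log q(u)|=\mathcal{O}(a_n\gamma_n^{-1})$ a.s., which by the first display also bounds $|\hat{\xi}_1-\xi|$, proving the theorem. I expect the main obstacle to be the uniform control of $f_n-f$ along the random sequence $Q_n(u)$: one must patch together exponential/maximal inequalities for the length-biased kernel sums with an iterated-logarithm argument to obtain exactly the $h_n^{-1}n^{-1/2}(\log\log n)^{1/2}$ stochastic rate, and then verify that composing with the empirical quantile process does not inflate it. A closely related delicate point is keeping $f_nQ_n(u)\,fQ(u)$ bounded away from zero uniformly in $u$; since $fQ(u)$ itself need not be bounded below on all of $(0,1)$, this is precisely what the factor $\gamma_n^{-1}$ absorbs, and one still has to check that $f_nQ_n(u)$ inherits the same lower bound as soon as $\sup_x|f_n-f|\to0$ almost surely.
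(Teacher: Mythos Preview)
Your approach is correct and follows essentially the same architecture as the paper's proof: bound $|\hat\xi_1-\xi|$ by the sup of the log-difference, convert the log-difference into $|f_nQ_n(u)-fQ(u)|$ divided by a lower bound for the density, and then establish $\sup_u|f_nQ_n(u)-fQ(u)|=\mathcal{O}(a_n)$ a.s.

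There are two cosmetic differences worth noting. First, for the logarithm step the paper uses the elementary inequality $|\log z|\le|1/z-1|$ with $z=fQ(u)/f_nQ_n(u)$, which leaves only $fQ(u)$ in the denominator; your mean-value-theorem route produces $\min\bigl(fQ(u),f_nQ_n(u)\bigr)$ there, which is why you (correctly) flag the need to check that $f_nQ_n$ inherits the same lower bound. The paper's choice sidesteps that issue entirely. Second, and more substantively, the paper does not carry out your decomposition $f_nQ_n-fQ=(f_n-f)\circ Q_n+(f\circ Q_n-f\circ Q)$ at all: it simply invokes Lemma~A.2 of \cite{akbari2019nonparametric}, which already packages exactly the uniform rate $\sup_u|f_nQ_n(u)-fQ(u)|=\mathcal{O}(a_n)$ a.s. Your proposal is effectively re-deriving the content of that lemma, so it is more self-contained but longer; the paper's proof is a two-line reduction plus a citation.
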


\begin{proof}
    We have,
    \begin{equation*}
        \hat{\xi}_1(u)=\int_{0}^{1} \log \frac{1}{f_nQ_n(u)}\,du
    \end{equation*}
Equivalently,
\begin{equation*}
\begin{aligned}
    \hat{\xi}_1(u)-\xi(u)&= -\int_{0}^{1} \log {f_nQ_n(u)}\,du + \int_{0}^{1} \log {fQ(u)}\,du\\
    &=\int_{0}^{1} \log {fQ(u)}\,du -\int_{0}^{1} \log {f_nQ_n(u)}\,du
\end{aligned}
    \end{equation*}
    We have,
    \begin{equation}\label{eq9}
        \left| \hat{\xi}_1(u)-\xi(u) \right | = \int_{0}^{1} \left | \log \frac{fQ(u)}{f_nQ_n(u)} \right | \,du
    \end{equation}
    Since for all $z \geq 0$, $|\log z| \leq | \frac{1}{z} -1 |$, we get,
    \begin{equation*}
         \left | \log \frac{fQ(u)}{f_nQ_n(u)} \right | \leq \left | \frac{1}{\frac{fQ(u)}{f_nQ_n(u)}} -1 \right | \leq \left | \frac{f_nQ_n(u)-fQ(u)}{fQ(u)}  \right |
    \end{equation*}
    For any $X \in A_n$, one can see that ,
    \begin{equation*}
        fQ(u) \geq \gamma_n
       \end{equation*}
       We can easily obtain from these relation that,
    \begin{equation}\label{eq10}
        \sup_{x \in A_n} \left | \log \frac{fQ(u)}{f_nQ_n(u)} \right | \leq  \frac {\sup_x |f_nQ_n(u)-fQ(u)|} {\gamma_n}
        \end{equation}
        Substituting equation \eqref{eq10} in \eqref{eq9}, we get
        \begin{equation*}
            \sup_x \left| \hat{\xi}_1(u)-\xi(u) \right | \leq \frac {\sup_x |f_nQ_n(u)-fQ(u)|} {\gamma_n} \int_{0}^{1} \,du
        \end{equation*}
        Using Lemma A.2 of \cite{akbari2019nonparametric} we get,
    \begin{equation*}
        \sup_x \left| \hat{\xi}_1(u)-\xi(u) \right | = \mathcal{O}(a_n\gamma_n^{-1})~~~~~a.s.
 \end{equation*}    
    Where $a_n= h_n^2 \vee h_n^{-1 }n^{-1/2}(\log \log n)^{1/2}$.
\end{proof}

Next, we consider a plug-in estimator for quantile-based entropy function as defined in \eqref{est2}, given by,
\begin{equation} \label{myest2}
     \hat{\xi}_2(u)=\int_{0}^{1} \log \left( \frac{1}{h_n}\int_{0}^{1}K\left(\frac{p-u}{h_n}\right)\, dQ_n(p)\right) \,du
\end{equation}

The strong uniform consistency of the above estimator \eqref{myest2} is given in Theorem 3.2

\begin{theorem}
    Suppose the assumptions (1), (3) and (4) holds. Then as $n \to \infty$, it can be obtained that, for any $\epsilon>0$,
   \begin{equation*}
    \sup_{\epsilon \leq u \leq 1-\epsilon} \left| \hat{\xi}_2(u)-\xi(u)\right|=\mathcal{O}\left(h_n^{-1}\sqrt{\frac{\log \log n}{n}}\vee h_n^2\right)~~~~~a.s.
\end{equation*}
\end{theorem}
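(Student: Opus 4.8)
The plan is to run the same scheme as in the proof of Theorem~3.1, with the density ratio $fQ(u)/f_nQ_n(u)$ replaced by the quantile-density ratio $q(u)/q_{2n}(u)$, where $q_{2n}$ is the kernel estimator in \eqref{est2}. Since $\hat{\xi}_2$ and $\xi$ are both integrals of the logarithm of an (estimated) quantile density over $(0,1)$, I would first write
\begin{equation*}
\hat{\xi}_2(u)-\xi(u)=\int_{0}^{1}\log q(p)\,dp-\int_{0}^{1}\log q_{2n}(p)\,dp ,
\end{equation*}
so that
\begin{equation*}
\left|\hat{\xi}_2(u)-\xi(u)\right|\le\int_{\epsilon}^{1-\epsilon}\left|\log\frac{q(p)}{q_{2n}(p)}\right|dp+\big(\text{contribution of }(0,\epsilon)\cup(1-\epsilon,1)\big),
\end{equation*}
thereby reducing matters to bounding the integrand uniformly over the compact set $[\epsilon,1-\epsilon]$. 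This restriction is exactly why the statement is phrased over $\epsilon\le u\le 1-\epsilon$: within a band of width $h_n$ of the endpoints the smoothing kernel in \eqref{est2} protrudes outside $[0,1]$, so there $q_{2n}$ carries an $O(1)$ boundary bias that does not vanish and no uniform rate can hold up to the endpoints.

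Next I would pin down that $q$ is bounded and bounded away from $0$ on $[\epsilon,1-\epsilon]$: continuity of $q$ (a consequence of Assumption~(4), under which $q''$ and hence $q$ is well-behaved) together with strict positivity of the quantile density on $(0,1)$ yields $0<m_\epsilon\le q(p)\le M_\epsilon<\infty$ for $p\in[\epsilon,1-\epsilon]$. I would then invoke the uniform strong consistency of the kernel quantile-density estimator $q_{2n}$ over compact subintervals established by \cite{akbari2019nonparametric}, namely
\begin{equation*}
\sup_{\epsilon\le p\le 1-\epsilon}\left|q_{2n}(p)-q(p)\right|=\mathcal{O}\!\left(h_n^{-1}\sqrt{\frac{\log\log n}{n}}\vee h_n^{2}\right)\qquad\text{a.s.},
\end{equation*}
and note that this rate, call it $a_n$, tends to $0$ under the bandwidth conditions (for instance $h_n\sim n^{-\alpha}$ with $0<\alpha<1/2$). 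Hence, almost surely for all $n$ large enough, $q_{2n}(p)\ge m_\epsilon/2>0$ uniformly in $p\in[\epsilon,1-\epsilon]$, so the ratio $q(p)/q_{2n}(p)$ eventually lies in a fixed compact neighbourhood of $1$.

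On that event I would apply the elementary inequality $|\log z|\le\kappa\,|z-1|$, valid for $z$ in any fixed neighbourhood of $1$ bounded away from $0$ (the analogue of the $|\log z|\le|1/z-1|$ step used for Theorem~3.1), to obtain
\begin{equation*}
\left|\log\frac{q(p)}{q_{2n}(p)}\right|\le\kappa\left|\frac{q(p)}{q_{2n}(p)}-1\right|=\kappa\,\frac{|q(p)-q_{2n}(p)|}{q_{2n}(p)}\le\frac{2\kappa}{m_\epsilon}\left|q_{2n}(p)-q(p)\right| .
\end{equation*}
Taking the supremum over $p\in[\epsilon,1-\epsilon]$ and integrating over an interval of length at most $1$ then gives
\begin{equation*}
\sup_{\epsilon\le u\le 1-\epsilon}\left|\hat{\xi}_2(u)-\xi(u)\right|\le\frac{2\kappa}{m_\epsilon}\sup_{\epsilon\le p\le 1-\epsilon}\left|q_{2n}(p)-q(p)\right|=\mathcal{O}\!\left(h_n^{-1}\sqrt{\frac{\log\log n}{n}}\vee h_n^{2}\right)\qquad\text{a.s.}
\end{equation*}

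The main obstacle is the boundary behaviour of the estimator \eqref{est2}: to get the clean $\mathcal{O}(a_n)$ rate one must either discard the end-bands $(0,\epsilon)\cup(1-\epsilon,1)$ — which is what the stated supremum does — or else show that their contribution to $\int_0^1|\log(q/q_{2n})|\,dp$ is $o(a_n)$, which fails for the uncorrected kernel quantile estimator; hence the compact-subinterval restriction is genuinely needed. A secondary technical point is ensuring $q_{2n}$ is eventually uniformly bounded away from $0$ so that the logarithm is Lipschitz-controllable, which is handled by the uniform consistency result of \cite{akbari2019nonparametric} combined with strict positivity of the quantile density on $[\epsilon,1-\epsilon]$.
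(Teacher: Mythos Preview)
Your argument is correct and lands on the same rate, but the emphasis is inverted relative to the paper. The paper's proof spends almost all its effort \emph{re-deriving} the uniform bound for $q_{2n}-q$ on $[\epsilon,1-\epsilon]$: it writes $q_{2n}(u)-q(u)=I+II$, where $I$ is the stochastic piece $\tfrac{1}{h_n}\int K\bigl(\tfrac{p-u}{h_n}\bigr)\,d(Q_n-Q)(p)$, handled by integration by parts and the law of the iterated logarithm for $Q_n$ (Remark~3 of \cite{akbari2019nonparametric}) to get $\mathcal{O}\bigl(h_n^{-1}\sqrt{\log\log n/n}\bigr)$, and $II$ is the deterministic bias $\tfrac{1}{h_n}\int K\bigl(\tfrac{p-u}{h_n}\bigr)q(p)\,dp-q(u)$, reduced to $\mathcal{O}(h_n^2)$ by Taylor expansion under Assumption~(4). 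Having established $\sup_{[\epsilon,1-\epsilon]}|q_{2n}-q|=\mathcal{O}(a_n)$, the paper then simply says ``Hence the required result easily follows'' and does not spell out the logarithm step at all.

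You do the opposite: you take the uniform bound on $q_{2n}-q$ as a black box from \cite{akbari2019nonparametric} and instead work out carefully the passage from $|q_{2n}-q|$ to $|\log q_{2n}-\log q|$, via positivity of $q$ on the compact interval, eventual positivity of $q_{2n}$, and a Lipschitz bound on $\log$ near $1$. Your version is more honest about why the restriction to $[\epsilon,1-\epsilon]$ is needed and about why the logarithm causes no trouble; the paper's version makes the source of the rate (stochastic error versus bias) explicit. Either route is fine; together they form one complete proof.
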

\begin{proof}
    We have,
    \begin{equation*}
        \hat{\xi}_2(u)= \int_{0}^{1} \log \left( \frac{1}{h_n}\int_{0}^{1}K\left(\frac{p-u}{h_n}\right)\, dQ_n(p)\right) \,du
    \end{equation*}
Now,
\begin{equation*}
\begin{aligned}
     \hat{\xi}_2(u)-\xi(u)&=\int_{0}^{1} \log\left(\frac{q_{2n}(u)}{q(u)}\right) \, du \\
     &=\int_{0}^{1} \log \left( \frac{\frac{1}{h_n}\int_{0}^{1}K \left(\frac{p-u}{h_n}\right)q_n(p)dp}{q(u)}\right) \, du \\
     &=\int_{0}^{1} \log \left( \frac{\frac{1}{h_n}\int_{0}^{1}\frac{K \left(\frac{p-u}{h_n}\right)}{f_nQ_n(p)}dp}{q(u)}\right) \, du
     \end{aligned}
\end{equation*}
We have,
\begin{equation*}
\begin{aligned}
    \Tilde{q}(u)-q(u)&=\frac{1}{h_n}\int_{0}^{1}\frac{K \left(\frac{p-u}{h_n}\right)}{f_nQ_n(p)}dp-q(u)\\
    \Tilde{q}(u)-q(u)&=\frac{1}{h_n}\int_{0}^{1}\frac{K \left(\frac{p-u}{h_n}\right)}{f_nQ_n(p)}dp-\frac{1}{h_n}\int_{0}^{1}\frac{K \left(\frac{p-u}{h_n}\right)}{fQ(p)}dp + \frac{1}{h_n}\int_{0}^{1}\frac{K \left(\frac{p-u}{h_n}\right)}{fQ(p)}dp-\frac{1}{fQ(u)}\\
    &=\frac{1}{h_n}\int_{0}^{1}K \left(\frac{p-u}{h_n}\right)\, dQ_n(p)-\frac{1}{h_n}\int_{0}^{1}K \left(\frac{p-u}{h_n}\right)\, dQ(p)\\ &+\frac{1}{h_n}\int_{0}^{1}K \left(\frac{p-u}{h_n}\right)\, dQ(p)-q(u)\\
    &= I+II
    \end{aligned}
\end{equation*}
Consider,
\begin{equation*}
    \begin{aligned}
        I&=\frac{1}{h_n}\int_{0}^{1}K \left(\frac{p-u}{h_n}\right)\, d(Q_n(p)-Q(p))\\
         \end{aligned}
\end{equation*}

\text{Applying integration by parts and Remark (3) of \cite{akbari2019nonparametric} we get,}
\begin{equation*}
\begin{aligned}
I&\leq \frac{1}{h_n}\int_{0}^{1}|Q_n(p)-Q(p)|dK\left(\frac{p-u}{h_n}\right)\\
&\leq \frac{1}{h_n} \sup_{0<p<1} |Q_n(p)-Q(p)|\\
&= \mathcal{O}\left(h_n^{-1}\sqrt{\frac{\log \log n}{n}}\right)~~ a.s.
 \end{aligned}
\end{equation*}
To deal with II, we have,
\begin{equation*}
\begin{aligned}
  II&=\frac{1}{h_n}\int_{0}^{1}K \left(\frac{p-u}{h_n}\right)\, dQ(p)-q(u)\\
    \end{aligned}
\end{equation*}
Putting $\frac{p-u}{h_n}= \nu$, we get,
\begin{equation*}
    II=\int_{-\frac{u}{h_n}}^{\frac{1-u}{h_n}}K(\nu)q(u+\nu{h_n})\,d\nu -q(u)
\end{equation*}
Now using the Taylor series expansion $q(u+\nu{h_n})$ around $u$ we can write $q(u+\nu{h_n})=q(u)+\nu{h_n}q^{'}(u)+ \ldots$, Since $K$ satisfies Assumption 1, being of bounded variation and $q$ satisfies Assumption 4 with a bounded second derivative $q^{''}$, we get $II=\mathcal{O}(h_n^2)$ uniformly in $\epsilon \leq u \leq 1-\epsilon$.
Accordingly, this results,
\begin{equation*}
    \sup_{\epsilon \leq u \leq 1-\epsilon} \left| \Tilde{q}(u)-q(u)\right|=I+II=\mathcal{O}\left(h_n^{-1}\sqrt{\frac{\log \log n}{n}}\vee h_n^2\right)~~~~a.s.
\end{equation*}
Hence the required result easily follows.
\end{proof}

 \section{Numerical analysis}\label{sec4}
 \subsection{Simulation}\label{simulation}
This section presents numerical simulations to evaluate the performance of the two estimators introduced in Section \ref{sec2}. For this purpose, we consider two families of distributions for generating length-biased data.
First, random samples of various sizes 
$n$ are drawn from the Govindarajulu distribution with quantile function,
\begin{equation}\label{GR}
   Q(u)= \theta + \sigma((\beta+1)u^{\beta}-\beta{u^{\beta+1}}),~~ \theta,\sigma,\beta>0,~~0 \leq u \leq 1.
\end{equation}
This distribution is a well-known quantile-based model commonly used to describe lifetime data with a bathtub-shaped hazard function. The simulation study is conducted using the following parameter settings: 
\begin{equation*}
\theta=0,~\sigma=1,~\beta=1/4 
\end{equation*}
\begin{equation*}
     \theta=0,~\sigma=3/4,~\beta=1/4
   \end{equation*}
\begin{equation*}
     \theta=0,~\sigma=1,~\beta=1
\end{equation*}
In this simulation study, the estimators are computed using the Epanechnikov function as the kernel function and the bandwidth parameter $h_n$ is selected according to the rule-of-thumb method proposed by \cite{borrajo2017bandwidth} for length-biased data.
We generate 500 simulated samples from Govindarajulu distributions for varying values of $n=50,100,300,400$.

\begin{table}[t]
    \centering
    \caption{Abs. Bias and MSE for $\hat{\xi_1}$  for Govindarajulu distribution.}
    \renewcommand{\arraystretch}{1.2}
\setlength{\tabcolsep}{12pt}

    \begin{tabular}{|c|c|cc|}
    \hline
        \textbf{Govindarajulu model } & \textbf{n} & \textbf{MSE} & \textbf{Abs. Bias }   \\ \hline
        (0,1,1/4) & 50 & 0.0833 & 0.2043  \\ 
         & 100 & 0.0612 & 0.2031  \\ 
         & 300 & 0.0482 & 0.1934  \\ 
         & 400 & 0.0457 & 0.1892  \\ 
         &  &  &   \\ 
        (0,3/4,1/4) & 50 & 0.0845 & 0.2072  \\ 
         & 100 & 0.0624 & 0.2060  \\ 
         & 300 & 0.0494 & 0.1963  \\ 
         & 400 & 0.0468 & 0.1921  \\ 
         &  &  &   \\ 
        (0,1,1) & 50 & 0.0516 & 0.0100  \\ 
         & 100 & 0.0321 & 0.0052  \\ 
         & 300 & 0.0203 & 0.0022  \\ 
         & 400 & 0.0141 & 0.0011  \\ 
           \hline
    \end{tabular}
    \label{table 1}
\end{table}

\begin{table}[t]
    \centering
    \caption{Abs. Bias and MSE for $\hat{\xi_2}$ for Govindarajulu distribution.}
    \renewcommand{\arraystretch}{1.2}
\setlength{\tabcolsep}{12pt}

    \begin{tabular}{|c|c|cc|}
    \hline
        \textbf{Govindarajulu model } & \textbf{n} & \textbf{MSE} & \textbf{Abs. Bias } \\ \hline
        (0,1,1/4) & 50 & 0.0834 & 0.1165  \\ 
        & 100 & 0.0358 & 0.0514  \\ 
         & 300 & 0.0092 & 0.0063  \\ 
         & 400 & 0.0072 & 0.0061  \\ 
         &  &  &   \\ 
        (0,3/4,1/4) & 50 & 0.0960 & 0.1528  \\ 
        & 100 & 0.0388 & 0.0685  \\ 
        & 300 & 0.0171 & 0.0157  \\ 
        & 400 & 0.0073 & 0.0063  \\ 
        &  & &   \\ 
        (0,1,1) & 50 & 0.0642 & 0.1845  \\ 
         & 100 & 0.0270 & 0.1058  \\ 
         & 300 & 0.0066 & 0.0315  \\ 
         & 400 & 0.0048 & 0.0207  \\ 
        \hline
    \end{tabular}
    \label{table 2}
\end{table}

From Tables~\ref{table 1} and~\ref{table 2}, it is evident that as the sample size 
$n$ increases, both the Mean Squared Error (MSE) and the Absolute Bias (Abs. Bias) decrease. Among the two estimators, $\hat{\xi}_2(u)$  consistently shows lower MSE and Absolute Bias compared to $\hat{\xi}_1(u)$
, indicating better performance.

To further evaluate the estimators, we examine another quantile-based model without a closed-form cumulative distribution function: the Generalized Lambda Distribution (GLD), defined by parameters $(\lambda_1,\lambda_2,\lambda_3,\lambda_4)$  , with the quantile function given by:
\begin{equation}\label{Qgld}
Q(u) = \lambda_1 + \frac{u^{\lambda_3} - (1 - u)^{\lambda_4}}{\lambda_2}, \quad 0 < u < 1
\end{equation}
In \eqref{Qgld}, $\lambda_1$ is a location parameter, $\lambda_2$ is a scale parameter, while $\lambda_3$ and $\lambda_4$ determine the shape of the density function.
The simulation study is conducted using the following parameter configurations:
\begin{equation*}
\lambda_1 = 2,\ \lambda_2 = 1,\ \lambda_3 = 2,\ \lambda_4 = 6
\end{equation*}
\begin{equation*}
\lambda_1 = 2,\ \lambda_2 = 1,\ \lambda_3 = 3,\ \lambda_4 = 5
\end{equation*}
\begin{equation*}
\lambda_1 = 3,\ \lambda_2 = 2,\ \lambda_3 = 1,\ \lambda_4 = 5
\end{equation*}

For each of these models, the MSE and Absolute Bias of the estimators  $\hat{\xi}_1(u)$
and $\hat{\xi}_2(u)$ were computed based on 500 simulated samples for various values of 
$n$. The results are summarized in Tables~\ref{table 3} and~\ref{table 4}.

\begin{table}[t]
    \centering
    \caption{Abs. Bias and MSE for $\hat{\xi_1}$  for Generalised lambda distribution.}
    \renewcommand{\arraystretch}{1.2}
\setlength{\tabcolsep}{12pt}

    \begin{tabular}{|c|c|cc|}
    \hline
         \textbf{GLD model } & \textbf{n} & \textbf{MSE} & \textbf{Abs. Bias }   \\ \hline
        (2,1,2,6) & 50 & 0.0213 & 0.0975  \\ 
         & 100 & 0.0109 & 0.0757  \\ 
         & 200 & 0.0058 & 0.0551  \\ 
         & 400 & 0.0047 & 0.0540  \\ 
         & 500 & 0.0045 & 0.0519  \\ 
         &  &  &   \\ 
        (2,1,3,5) & 50 & 0.0165 & 0.0743  \\ 
         & 100 & 0.0073 & 0.0498  \\ 
         & 200 & 0.0034 & 0.0267  \\ 
         & 400 & 0.0026 & 0.0260  \\ 
         & 500 & 0.0025 & 0.0229  \\ 
         &  &  &   \\ 
        (3,2,1,5) & 50 & 0.0141 & 0.0393  \\ 
         & 100 & 0.0060 & 0.0233  \\ 
         & 200 & 0.0035 & 0.0086  \\ 
         & 400 & 0.0033 & 0.0081  \\ 
         & 500 & 0.0028 & 0.0054  \\ 
          \hline
    
    \end{tabular}
    \label{table 3}
\end{table}

\begin{table}[t]
    \centering
    \caption{Abs. Bias and MSE for $\hat{\xi_2}$  for Generalised lambda distribution.}
    \renewcommand{\arraystretch}{1.2}
\setlength{\tabcolsep}{12pt}

    \begin{tabular}{|c|c|cc|}
    \hline
       \textbf{GLD model } & \textbf{n} & \textbf{MSE} & \textbf{Abs. Bias } \\ \hline
        (2,1,2,6) & 50 & 0.0377 & 0.1751  \\ 
         & 100 & 0.0190 & 0.1276  \\ 
         & 200 & 0.0101 & 0.0943  \\ 
         & 400 & 0.0064 & 0.0765  \\ 
         & 500 & 0.0058 & 0.0732  \\ 
         &  &  &   \\ 
        (2,1,3,5) & 50 & 0.0351 & 0.1698  \\ 
         & 100 & 0.0177 & 0.1232  \\ 
         & 200 & 0.0094 & 0.0917  \\ 
         & 400 & 0.0061 & 0.0749  \\ 
         & 500 & 0.0055 & 0.0718  \\ 
         &  &  &   \\ 
        (3,2,1,5) & 50 & 0.0249 & 0.1243  \\ 
         & 100 & 0.0110 & 0.0831  \\ 
         & 200 & 0.0047 & 0.0542  \\ 
         & 400 & 0.0023 & 0.0401  \\ 
         & 500 & 0.0020 & 0.0374  \\ \hline
    \end{tabular}
    \label{table 4}
\end{table}

The results indicate that $\hat{\xi}_1(u)$ and $\hat{\xi}_2(u)$ have possessed reasonable performance.
Based on the values presented in these tables, we can draw the following conclusions:
\begin{itemize}
    \item As expected, the MSE and Abs. Bias value of the two estimators decreases as the sample size $n$ increases.
    \item In the case of the Govindarajulu model, the estimator $\hat{\xi_2}$ perform better in terms of MSE.
    
\end{itemize}

 \subsection{Comparitive study}\label{compstudy}
This section compares our estimators with existing estimators from the literature. Considerable research has been devoted to the nonparametric estimation of Shannon entropy. Here, we focus on two kernel-based entropy estimators for length-biased data, as introduced by \cite{rajesh2022kernel}, and carry out a comprehensive study with the proposed estimators using quantile models that lack a closed-form distribution function, as discussed in Section \ref{simulation}.
\par
 Assuming that $(Y_1,Y_2, \ldots,Y_n)$ a sample of independent and identically distributed observations from length-biased random variable $Y$ with common distribution function $G$. Then the kernel density estimator of $f$ due to \cite{franklin1988comparioson}, is given by,
\begin{equation}\label{bhattest}
    \hat{f}_1(x)=\frac{\frac{1}{h}\sum_{i=1}^{n}K\left(\frac{x-Y_i}{h}\right)}{x\sum_{i=1}^{n} Y_i^{-1}}
\end{equation}
 where $K(\cdot)$ is a kernel function and $h$ is the bandwidth. \cite{jones1991kernel} proposed an improved estimator,
 \begin{equation}\label{jonesest}
     \hat{f}_2(x)=\frac{\frac{1}{h}\sum_{i=1}^{n}K\left(\frac{x-Y_i}{h}\right)}{\sum_{i=1}^{n} Y_i^{-1}}
 \end{equation}
Then two integral estimators of \cite{rajesh2022kernel} for $H(X)$ are given by, 
 \begin{equation}\label{entropyest}
     \hat{H}_l= -\int_{0}^{\infty} (\log\hat{f}_l(x))\hat{f}_l(x),~~ l=1,2,
 \end{equation}
where $\hat{f}_l$ is the estimator given in \eqref{bhattest} and \eqref{jonesest} respectively. For comparison, length-biased random samples are generated from the Govindarajulu distribution with parameters $\theta=0,~\sigma=3/4,~\beta=1/4$ and Generalized lambda distribution with parameters $\lambda_1=2,\lambda_2=1,\lambda_3=3,\lambda_4=5$. The estimated results are presented in Table \ref{tab:mse_results}.
\par
As anticipated, Table \ref{tab:mse_results} shows that for the Generalized lambda model with parameters $(2,1,3,5)$, the estimator $\hat{\xi_1}$ outperforms the other estimators in terms od Mean Squared Error. In contrast, for the Govindarajulu model with parameters
 $(0,3/,4,1/4)$, the estimator $\hat{\xi_2}$ has the least MSE as compared to the other estimators $\hat{\xi_1}$, $\hat{H}_1$ and $\hat{H}_2$.

\begin{table}[t]
\centering
\caption{Mean Squared Error (MSE) comparison for estimators defined in \eqref{entropyest} for different quantile models and sample sizes.}

\renewcommand{\arraystretch}{1.2}
\setlength{\tabcolsep}{12pt}

\begin{tabular}{|l|c|cccc|}
\hline
Model                                    & $n$   & \multicolumn{4}{c|}{MSE}           \\ \hline
                                         &     & $\hat{H}_1$     & $\hat{H}_2$     & $\hat{\xi}_{1}$   & $\hat{\xi}_{2}$    \\
Govindarajulu(0,3/4,1/,4)                & 50  & 0.1023 & 0.1026 & 0.0845 & 0.0960 \\
                                         & 100 & 0.0659 & 0.0660 & 0.0624 & 0.0388 \\
                                         & 300 & 0.0399 & 0.0400 & 0.0494 & 0.0171 \\
                                         & 400 & 0.0352 & 0.0352 & 0.0468 & 0.0073 \\
                                         &      &       &         &       &        \\
GLD (2,1,3,5) & 50  & 0.0225 & 0.0206 & 0.0165 & 0.0351 \\
                    & 100 & 0.0140 & 0.0126 & 0.0073 & 0.0177 \\
                     & 400 & 0.0055 & 0.0051 & 0.0026 & 0.0061 \\
                    & 500 & 0.0045 & 0.0042 & 0.0025 & 0.0055\\
\hline
\end{tabular}
\label{tab:mse_results}
\end{table}

 \subsection{Real data analysis}\label{realdata}
In this analysis, we examine length-biased data comprising width measurements of 46 shrubs, obtained through line-transect sampling as detailed in Table 3 of \cite{muttlak1990ranked}. This sampling method assigns a higher inclusion probability to shrubs with greater widths, as the probability of selection is directly proportional to a shrub's width. Consequently, the dataset exemplifies length-biased sampling, where larger units are more likely to be included due to the inherent nature of the sampling mechanism. We have fitted a Power-Pareto distribution with shape parameters $\lambda_1$ and $\lambda_2$ and scale parameter is $C$.\par

\begin{figure}[ht!]
    \centering
\includegraphics[width= 15cm]{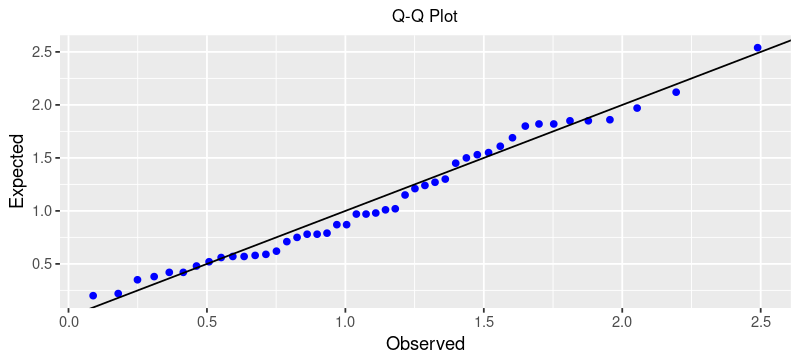} 
\caption{\textit{Q-Q} plot for shrubs data.}
\label{fig1}
\end{figure}  

The goodness of fit was assessed both graphically, using the $Q-Q$ plot (see Figure \ref{fig1}), and theoretically using the
 Kolmogorov-Smirnov (KS) test. The calculated value of the KS statistic is $0.1099$, which is less than the table value of $0.2292$. In both cases, the results indicate that the model is not rejected, suggesting that the Power-Pareto distribution provides a better fit to the data.  The corresponding parameter estimates obtained through the maximum likelihood method are as follows:
 
 \begin{equation*}
     \hat{C}= 1.5827,~~\hat{\lambda}_1=0.6368,~~ \hat{\lambda}_2=0.1016
 \end{equation*}

 The true value of the corresponding quantile-based entropy, as defined in \eqref{qse}, is $0.7570$. The nonparametric estimate of the quantile-based entropy based on the data is $\hat{\xi}_1 = 0.7541$, and for $\hat{\xi}_2$, it is $0.3677$.It is evident that the estimate obtained using $\hat{\xi}_1$ is closer to the true value compared to that obtained using $\hat{\xi}_2$.

 \section{Conclusion}\label{conclusion}
This work presents two nonparametric methods for estimating quantile-based Shannon entropy in the context of length-biased sampling, offering practical applications in the analysis and modeling of lifetime data. Using $q_{1n}(u)$ and $q_{2n}(u)$ from \cite{akbari2019nonparametric} as plug-in estimators, we proposed two new estimators, $\hat{\xi}_1(u)$ and $\hat{\xi}_2(u)$, and established their strong uniform convergence. To assess their performance, we conducted simulations with length-biased samples from various quantile distributions, including the Generalized lambda and Govindarajulu distributions. A comparison of mean square error (MSE) with the estimator from \cite{rajesh2022kernel} demonstrated that our proposed estimators achieve lower MSE, highlighting their improved accuracy. Additionally, their effectiveness was validated through real data analysis, confirming their practical applicability.

\section*{Funding details}
This work was supported by the Department of Science and Technology, Government of India, under the "WISE Fellowship for PhD" program, with file number DST/WISE-PhD/PM/2023/60(G).

\section*{Acknowledgements}
The first author sincerely acknowledges the financial support from the Department of Science and Technology, Government of India, under the "WISE Fellowship for PhD" program, which supported this research.

\section*{Declaration of competing interest}
The authors declare that they have no known competing financial interests or personal relationships that could have
appeared to influence the work reported in this paper.

\bibliographystyle{elsarticle-harv} 
 \bibliography{qll}

\end{document}